\crefname{section}{section}{sections}
\crefname{subsection}{subsection}{subsections}
\Crefname{section}{Section}{Sections}
\Crefname{subsection}{Subsection}{Subsections}
\crefname{figure}{Figure}{Figures}
\theoremstyle{plain}
\newtheorem{theorem}{Theorem}[section]
\theoremstyle{definition}
\newtheorem{definition}[theorem]{Definition}
\theoremstyle{remark}
\newtheorem{remark}[theorem]{Remark}
\newtheorem{example}[theorem]{Example}
\newcommand{\security}{\mathrm{sec}}
\newcommand{\F}{\mathbb{F}}
\newcommand{\N}{\mathbb{N}}
\newcommand{\R}{\mathbb{R}}
\newcommand{\C}{\mathbb{C}}
\renewcommand{\phi}{\varphi}
\renewcommand{\P}{\mathbb{P}}
\renewcommand{\hat}{\widehat}
\renewcommand{\tilde}{\widetilde}
\DeclareMathOperator{\chr}{char}
\DeclareMathOperator{\ev}{ev}
\DeclareMathOperator{\Div}{Div}
\DeclareMathOperator{\spn}{span}
\DeclareMathOperator{\supp}{supp}
\DeclareMathOperator{\RS}{RS}
\newcolumntype{L}{>{$}l<{$}}
\newcolumntype{C}{>{$}c<{$}}
\newcolumntype{R}{>{$}r<{$}}
\setlist[enumerate]{leftmargin=.5in}
\setlist[itemize]{leftmargin=.5in}
\title{Algebraic Geometry Codes for Secure Distributed Matrix Multiplication}
\author{
\IEEEauthorblockN{
    Okko~Makkonen, Elif~Saçıkara and Camilla~Hollanti
}

\IEEEauthorblockA{
    Department of Mathematics and Systems Analysis \\
    Aalto University, Finland \\
    \texttt{\{okko.makkonen, elif.sacikara, camilla.hollanti\}@aalto.fi}
}
}
\begin{document}

\maketitle

\begin{abstract}
In this paper, we propose a novel construction for secure distributed matrix multiplication (SDMM) based on algebraic geometry (AG) codes, which we call the PoleGap SDMM scheme. The proposed construction is inspired by the GASP code, where so-called gaps in a certain polynomial are utilized to achieve higher communication rates. Our construction considers the gaps in a Weierstrass semigroup of a rational place in an algebraic function field to achieve a similar increase in the rate. This construction shows that there is potential in utilizing AG codes and their subcodes in SDMM since we demonstrate a better performance compared to state-of-the-art schemes in some parameter regimes.
\end{abstract}

\section{Introduction}

Secure distributed matrix multiplication (SDMM) was first introduced by Chang and Tandon in \cite{chang2018capacity} as a way to utilize a distributed system to compute the product of two matrices without revealing information about the matrices to the workers. SDMM can be used to outsource the computation of large matrix products in data science and machine learning when the privacy of the data in the matrices is vital. Several schemes have been introduced, including those in \cite{d2020gasp, d2021degree, kakar2019capacity, aliasgari2020private} and more recently in \cite{lopez2022secure, hasircioglu2022bivariate, mital2022secure}. Additionally, SDMM schemes based on the so-called grid partition have been presented in \cite{machado2022root, karpuk2023modular, byrne2023straggler}. Many of the schemes presented in the literature follow a similar structure, where the encoding is done with some linear codes. A general framework, titled \emph{linear SDMM}, was introduced in \cite{makkonen2022general}. There, the codes that are used in the encoding, as well as their star product, are studied to give some fundamental bounds on such schemes.

An important class in linear codes are \emph{algebraic geometry (AG) codes}, which are constructed from projective smooth irreducible algebraic curves, or equivalently, \emph{algebraic function fields}. The first construction was introduced by Goppa in \cite{goppa1977codes} as a generalization of Reed--Solomon codes by considering \emph{Riemann--Roch spaces} instead of spaces of bounded degree polynomials. Since then, these codes have been shown to have nice coding-theoretic properties derived from the structure of algebraic function fields. Another breakthrough construction was given by Tsfasman, Vl{\u{a}}dutx and Zink in \cite{tsfasman1982modular}. Their construction was based on towers of algebraic function fields that behave asymptotically well and yield better parameters than the Gilbert--Varshamov bound.

For an AG code construction, one can work with a basis of a Riemann--Roch space to provide a generator matrix of the code. However, computing an explicit basis of a Riemann--Roch space is difficult in general \cite{hess2002computing}. Correspondingly, a similar difficulty can also be observed in the determination of \emph{Weierstrass semigroups}, which can be used to retrieve some information about the bases of Riemann--Roch spaces of specific divisors \cite{garcia1986weierstrass, castellanos2016one}. In this paper, we consider the Weierstrass semigroups to choose bases for subspaces of a Riemann--Roch space.

The theory of algebraic geometry codes has also been used for applications such as locally recoverable codes (LRCs) and code-based cryptography. For instance, the importance of AG codes for LRCs was shown in \cite{barg2017locally} as a generalization of the case where Reed--Solomon codes were used in \cite{tamo2014family}. Another well-known application of AG codes was considered by McEliece in \cite{mceliece1978public} where a public key cryptosystem based on \emph{Goppa codes} was introduced. See \cite{couvreur2017cryptanalysis, couvreur2020algebraic} for surveys on the applications of algebraic geometry codes.

\subsection{Contributions and Related Work}

In this paper, we utilize algebraic geometry codes to construct a linear SDMM scheme, called the PoleGap SDMM scheme, that outperforms the state-of-the-art schemes in terms of the download rate for some parameter choices. Our construction utilizes a similar strategy as the Gap Additive Secure Polynomial (GASP) scheme, introduced in \cite{d2020gasp, d2021degree}, where the ``gaps'' in a certain polynomial are utilized to decrease the download cost. Such polynomials are designed using the help of a \emph{degree table}. Similarly, we utilize the gaps in a Weierstrass semigroup of an algebraic function field, yielding a \emph{pole number table}. The response code of our construction will be an AG code, which means that suitable error correction algorithms from the literature can be utilized to provide robustness against Byzantine workers. This is in contrast to previous outer product partitioning schemes, where the response codes have not had similar structure. While the improvement in the rate may seem modest, keeping in mind that the computations in, \emph{e.g.}, machine learning, are massive, even small relative gains can translate into significant absolute savings in energy consumption and expenses. Furthermore, we hope that these first results serve as a proof-of-concept and prompt further improved constructions and new research directions.

There have been some generalizations to the schemes based on polynomial encoding, including those in \cite{hasircioglu2022bivariate}, but to the best of our knowledge, this paper is the first to explicitly utilize algebraic geometry codes in SDMM. However, our preprint \cite{hollanti2023algebraic} was soon followed by \cite{machado2023hera}, where an inner product partitioning scheme is constructed by utilizing Hermitian codes, with the aim of reducing the field size. Their work differs from ours in that they consider a different matrix partitioning as well as utilizing a field extension. The reason for restricting ourselves to function fields that do not require extension fields is explained in \cref{rmk:field_size}.

\subsection{Organization}

The structure of our paper is as follows. In \cref{sec:preliminaries} we recall necessary background from algebraic geometry codes and linear SDMM. In \cref{sec:AG_codes_in_SDMM} we introduce how to use algebraic geometry codes in SDMM by introducing the pole number table. In \cref{sec:construction} we present our construction of the PoleGap scheme. Finally, in \cref{sec:comparison} we compare our construction to some schemes from the literature.

\section{Preliminaries}\label{sec:preliminaries}

 Let $\F_q$ denote the finite field with $q$ elements, $\F_q^\times$ the group of units in $\F_q$, $\N_0 = \{0, 1, 2, \dots\}$ and $[n] = \{1, 2, \dots, n\}$. In this section, we present necessary background on algebraic function fields and algebraic geometry codes by mainly following  \cite{stichtenoth2009algebraic}.

\subsection{Algebraic Function Fields}\label{sec:algebraic_function_fields}

Let $K$ be a field. For the applications in algebraic coding theory and cryptography, the field $K$ is chosen as a finite field $\F_q$, but the results in this section hold for arbitrary base fields. Let $K[x]$ denote the polynomial ring in one variable. Recall that we define $K(x)$ as the field of fractions of $K[x]$. That is, 
\begin{equation*}
    K(x) \coloneqq \left\{ \frac{f(x)}{g(x)} \colon f(x), g(x) \in K[x] \text{ with } g(x) \neq 0  \right\},
\end{equation*}
which is called \emph{the rational function field} over $K$. One can easily see that $K(x)$ is indeed a field and an infinite transcendental extension over $K$. 

A field $F$ is said to be a function field over $K$, denoted by $F/K$, if it is a finite extension of $K(x)$. We often consider simple algebraic extensions of $K(x)$, which are generated by an irreducible polynomial, \emph{i.e.}, $F = K(x,y)$, where $y \in F$ is a root of some irreducible polynomial $\phi(T) \in K(x)[T]$.

It is well-known that such extensions, like in number theory or algebraic geometry, are studied in terms of certain \emph{local} components. In this paper, we use the standard definitions and notations as in \cite{stichtenoth2009algebraic}. Briefly, let $P$ be a \emph{place} in the set of places, denoted by $\P_F$, in a function field $F/K$. Recall that $P$ is called a \emph{rational place} if $\deg(P) \coloneqq [F_P : K] = 1$, where $F_P$ is the \emph{residue class field} of $P$. We denote the set of rational places by $\P_F^1$. For a given place $P$, $v_P$ denotes the corresponding \emph{discrete valuation map}. For a function $f \in F$ and place $P \in \P_F$, we say $P$ is a \emph{zero} (resp., \emph{pole}) of $f$ if $v_P(f) > 0$ (resp.,  $v_P(f) < 0$). Here, let us note that the discrete valuation map $v_P$ measures the multiplicity of the function $f$ at $P \in \P_F$. The places of the function field $F/K$ and $K(x)/K$ are related in the following way. Each place $P' \in \P_F$ contains a unique place $P \in \P_{K(x)}$ and there are at most $[F : K(x)]$ places $P' \in \P_F$ containing $P \in \P_{K(x)}$ by \cite[Theorem~3.1.11]{stichtenoth2009algebraic}.

A \emph{divisor} is defined as a formal sum of places and represented by $G \coloneqq \sum_{P \in \P_F} n_P P$ with finitely many nonzero integers $n_P$. The support of $G$ is the set of places with nonzero coefficients, \emph{i.e.}, $\supp(G) \coloneqq \{P \in \P_F \colon n_P \neq 0\}$. Also, we define \emph{the degree of a divisor} as $\deg(G) \coloneqq \sum_{P \in \P_F} n_P \deg(P)$. We denote the divisor group of $F$, \emph{i.e.}, the set containing all divisors, by $\Div(F)$. Since every function in $F$ has finitely many zeros and poles, we can introduce the zero and pole divisors of a function $f \in F$ by $(f)_0 \coloneqq \sum_{v_P(f) > 0} v_P(f) P$ and $(f)_\infty \coloneqq -\sum_{v_P(f) < 0} v_P(f) P$, respectively. The principal divisor of $f \in F$ is then $(f) = (f)_0 - (f)_\infty$.

Recall that a \emph{Riemann--Roch space} of a divisor $G \in \Div(F)$ is defined as
\begin{equation*}
    \mathcal{L}(G) \coloneqq \{x \in F \colon (x) + G \geq 0\} \cup \{0\}. 
\end{equation*}
This set is a finite-dimensional vector space over $K$, whose dimension is denoted by $\ell(G)$. However, computing the dimension or giving an explicit basis is difficult and leads to the definition of the \emph{genus} of a function field. Let us also recall that if $\deg(G) < 0$, then $\ell(G) = 0$. If $F/K$ is a function field of genus $g$ and $G \in \Div(F)$ with $\deg(G) \geq 2g - 1$, then $\ell(G) = \deg(G) + 1 - g$. 

In this paper we are interested in the Riemann--Roch spaces of divisors of the form $G = kP$ for some rational place $P$. Such Riemann--Roch spaces can be studied using the \emph{Weierstrass semigroup} $W(P)$. For a given place $P \in \P_F$, we define
\begin{equation*}
    W(P) \coloneqq \{n \geq 0 \mid \exists x \in F \colon (x)_\infty = nP \}.
\end{equation*}
The elements in $W(P)$ are said to be \emph{pole numbers} of $P$, while the elements in $\N_0 \setminus W(P)$ are called \emph{gaps}. It is easy to see that $W(P)$ is a semigroup under addition. The following theorem provides partial information on the structure of the Weierstrass semigroup.

\begin{theorem}[{Weierstrass Gap Theorem, \cite[Theorem 1.6.8]{stichtenoth2009algebraic}}]
Let $F/K$ be a function field of genus $g$ and $P \in \P_F^1$. Then $W(P) = \N_0 \setminus \mathcal{G}$, where $\lvert \mathcal{G} \rvert = g$. Furthermore, if $g > 0$, then $\mathcal{G} = \{i_1, \dots, i_g\}$ with $1 = i_1 < i_2 < \dots < i_g \leq 2g - 1$.
\end{theorem}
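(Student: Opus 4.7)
My plan is to analyze the ascending chain of Riemann--Roch spaces $\mathcal{L}(0) \subseteq \mathcal{L}(P) \subseteq \mathcal{L}(2P) \subseteq \cdots$ and track when the dimension strictly increases. Since $P$ is rational, $\deg(nP) = n$, and the first step is to show that each inclusion raises the dimension by at most one, so that $n \in W(P)$ if and only if $\ell(nP) = \ell((n-1)P) + 1$. To see this I would pick two functions $f_1, f_2 \in \mathcal{L}(nP) \setminus \mathcal{L}((n-1)P)$, both of which must satisfy $v_P(f_i) = -n$, and use the fact that $P$ is rational (so the residue field equals $K$) to cancel their leading $P$-adic coefficients, producing a $K$-linear combination that lies in $\mathcal{L}((n-1)P)$.

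Once this dichotomy is established, counting pole numbers up to any $N$ yields $\lvert W(P) \cap \{1, \dots, N\} \rvert = \ell(NP) - \ell(0) = \ell(NP) - 1$. The second step is then to invoke Riemann--Roch: since $\ell(NP) = N + 1 - g$ whenever $N \geq 2g - 1$, exactly $g$ of the integers in $\{1, \dots, N\}$ are gaps. Letting $N$ grow arbitrarily large shows $\lvert \mathcal{G} \rvert = g$ and that the gap set is finite.

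For the upper bound $i_g \leq 2g - 1$, I would note that for every $n \geq 2g$ both $nP$ and $(n-1)P$ have degree at least $2g - 1$, so Riemann--Roch forces $\ell(nP) - \ell((n-1)P) = 1$ and hence $n \in W(P)$; therefore every gap lies in $\{1, \dots, 2g - 1\}$. For the lower bound $i_1 = 1$, the constant function $1$ yields $0 \in W(P)$, and if $1$ were also in $W(P)$ then the closure of $W(P)$ under addition would force $W(P) = \N_0$, contradicting $\lvert \mathcal{G} \rvert = g > 0$.

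The main obstacle is the at-most-one-jump lemma in the first step: it relies crucially on $P$ being rational, because the argument that the quotient $\mathcal{L}(nP)/\mathcal{L}((n-1)P)$ has $K$-dimension at most one rests on being able to cancel leading coefficients in the one-dimensional residue field $F_P = K$. Once that ingredient is in hand, the remaining claims reduce to direct bookkeeping with the Riemann--Roch formula and the semigroup property of $W(P)$.
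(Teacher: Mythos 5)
The paper does not reprove this theorem; it quotes it from the Stichtenoth reference, and your argument is correct and is essentially the standard proof given there. The key ingredients — the at-most-one-jump lemma $\ell(nP) - \ell((n-1)P) \leq 1$ (valid because $P$ is rational, so the residue field $F_P = K$ and leading coefficients can be cancelled), the Riemann--Roch count $\ell(NP) = N + 1 - g$ for $N \geq 2g - 1$ to show there are exactly $g$ gaps, the forced jump for $n \geq 2g$ to bound $i_g \leq 2g - 1$, and the semigroup property of $W(P)$ to exclude $1 \in W(P)$ when $g > 0$ — are all correctly stated and applied.
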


\subsection{Kummer Extensions}\label{sec:Kummer_extensions}

In this section we consider certain special cases of \emph{Kummer extensions} described in \cite[Proposition 3.7.3]{stichtenoth2009algebraic}. As a summary, we state the following theorem about the structures needed in \cref{sec:AG_codes_in_SDMM}.

\begin{theorem}\label{thm:Kummer_special}
Let $K$ be a field with odd characteristic. Let $F = K(x,y)$ with $y^2 = f(x)$, where
\begin{equation*}
    f(x) = \prod_{i=1}^{d} (x - \alpha_i)
\end{equation*}
with distinct points $\alpha_1, \dots, \alpha_d \in K$ and an odd integer $d$. Then the following items hold.
\begin{enumerate}
    \item The function field $F/K$ has genus $g = \frac{d - 1}{2}$.
    \item We have $(x)_\infty = 2P_\infty$ and $(y)_\infty = dP_\infty$, where $P_\infty$ is the unique place at infinity.
\end{enumerate} 
\end{theorem}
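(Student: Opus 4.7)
The plan is to apply the general theorem on Kummer extensions (Stichtenoth, Proposition~3.7.3) to the degree-two extension $F/\F_q(x)$ defined by $y^2 = f(x)$. First I would verify the hypotheses of that proposition: since $\chr \F_q$ is odd we have $\gcd(2, \chr \F_q) = 1$, and because $f(x) = \prod_{i=1}^d (x - \alpha_i)$ with distinct roots and odd $d$, the polynomial $f$ has odd degree and simple zeros, so it cannot be a square in $\F_q(x)$. Thus the extension is a genuine quadratic Kummer extension to which the proposition applies.

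Next I would locate the places of $\F_q(x)$ that are ramified. For the finite place $P_{\alpha_i}$ associated with $x = \alpha_i$, the valuation $v_{P_{\alpha_i}}(f) = 1$, so $r_{P_{\alpha_i}} = \gcd(2,1) = 1$ and $P_{\alpha_i}$ is (totally) ramified. At all other finite places $P \neq P_{\alpha_i}$, $v_P(f) = 0$, giving $r_P = 2$ and no ramification. At the infinite place $P_\infty$ of $\F_q(x)$, $v_{P_\infty}(f) = -d$, which is odd because $d$ is, so $r_{P_\infty} = 1$ and $P_\infty$ also ramifies. This identifies exactly $d+1$ ramified rational places, each totally ramified with $e = 2$.

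With this data the genus follows immediately from the Kummer genus formula
\begin{equation*}
    g = 1 + 2(g_{\F_q(x)} - 1) + \tfrac{1}{2} \sum_{P} (2 - r_P) \deg(P) = 1 - 2 + \tfrac{1}{2}(d+1) = \tfrac{d-1}{2},
\end{equation*}
using $g_{\F_q(x)} = 0$ and the fact that the only contributions come from the $d+1$ rational places identified above. For the pole divisors, I would use that $P_\infty$ is totally ramified: by the Fundamental Equality there is a unique place $P_\infty' \in \P_F^1$ lying over $P_\infty$, with ramification index $2$. Consequently $v_{P_\infty'}(x) = 2 \cdot v_{P_\infty}(x) = -2$, and since $x$ has no other poles in $\F_q(x)$ it has no other poles in $F$, so $(x)_\infty = 2 P_\infty'$. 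Finally, from $y^2 = f(x)$ we read off $2 v_{P_\infty'}(y) = v_{P_\infty'}(f(x)) = d \cdot v_{P_\infty'}(x) = -2d$, giving $v_{P_\infty'}(y) = -d$ and $(y)_\infty = d P_\infty'$, since $y$ has no other poles.

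The only mild obstacle is bookkeeping at the infinite place: one has to be careful that $v_{P_\infty}(x - \alpha_i) = -1$ for each $i$, so that the pole orders of the factors of $f$ add correctly to $-d$ at $P_\infty$, and then that this value is odd precisely because $d$ is odd (which is where the hypothesis on $d$ is used). Everything else is a direct application of the Kummer extension machinery, so no new ideas are required beyond the invocation of Stichtenoth's proposition and the Fundamental Equality.
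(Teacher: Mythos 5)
Your proof is correct and takes the same route the paper implicitly relies on: the paper states this theorem as a direct summary of Stichtenoth's Proposition~3.7.3 on Kummer extensions without giving a separate proof, and your argument simply supplies the omitted bookkeeping (verifying the hypotheses, computing $r_P$ at the $d+1$ ramified rational places, plugging into the genus formula, and reading off the pole divisors via total ramification at $P_\infty$). The one point worth adding for completeness is that Stichtenoth's proposition also requires $\F_q$ to contain a primitive second root of unity, which again is guaranteed by odd characteristic.
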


For these types of extensions, the following theorem describes the Weierstrass semigroup of $P_\infty \in \P_F$.

\begin{theorem}[{\cite[Theorem 3.2]{castellanos2016one}}]\label{thm:Weierstrass_semi_structure}
Let $F/K$ be a function field described in \cref{thm:Kummer_special}. Then the Weierstrass semigroup of $P_\infty$ is an additive semigroup generated by $2$ and $d$.
\end{theorem}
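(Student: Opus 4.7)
The plan is to exhibit $2$ and $d$ as pole numbers at $P_\infty$, use the semigroup property to deduce that $\langle 2, d\rangle \subseteq W(P_\infty)$, and then conclude equality via a counting argument based on the Weierstrass Gap Theorem.

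First, I would extract from \cref{thm:Kummer_special} that $v_{P_\infty}(x) = -2$ and $v_{P_\infty}(y) = -d$, since $(x)_\infty = 2P_\infty$ forces $P_\infty$ to be the unique pole of $x$ of order $2$, and likewise for $y$. In particular, $(x)_\infty$ and $(y)_\infty$ are supported only at $P_\infty$, so by definition of $W(P_\infty)$ we get $2, d \in W(P_\infty)$. Because $W(P_\infty)$ is closed under addition, the numerical semigroup $\langle 2, d \rangle = \{2a + db : a,b \in \N_0\}$ generated by $2$ and $d$ is contained in $W(P_\infty)$.

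To upgrade this inclusion to an equality, I would compare the number of gaps on each side. Since $d$ is odd, $\gcd(2,d) = 1$, and by the classical Sylvester--Frobenius formula the numerical semigroup $\langle 2, d \rangle$ has exactly $\frac{(2-1)(d-1)}{2} = \frac{d-1}{2}$ gaps in $\N_0$. On the other hand, by \cref{thm:Kummer_special} the genus of $F/\F_q$ is $g = \frac{d-1}{2}$, so the Weierstrass Gap Theorem tells us that $W(P_\infty)$ has exactly $\frac{d-1}{2}$ gaps as well. Thus $\langle 2, d \rangle$ and $W(P_\infty)$ are both subsets of $\N_0$ with complements of the same finite cardinality, and one is contained in the other; hence they coincide.

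The only mildly nontrivial ingredient is the gap count for $\langle 2, d \rangle$, but this is a standard fact about numerical semigroups generated by two coprime integers, so I would simply cite it rather than reprove it. No other step should cause difficulty, since the valuations $v_{P_\infty}(x)$ and $v_{P_\infty}(y)$ are handed to us by \cref{thm:Kummer_special} and the semigroup property of $W(P_\infty)$ is immediate from the definition via pole divisors.
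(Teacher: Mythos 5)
Your proof is correct and is essentially the standard argument for establishing one-point Weierstrass semigroups in Kummer extensions; note that the paper itself does not prove this theorem but simply cites it from Castellanos--Tizziotti, so there is no in-paper proof to compare against. Your route — deducing $\langle 2, d\rangle \subseteq W(P_\infty)$ from the pole divisors $(x)_\infty = 2P_\infty$ and $(y)_\infty = dP_\infty$, then forcing equality by matching the Sylvester--Frobenius gap count $(d-1)/2$ against the genus via the Weierstrass Gap Theorem — is sound and is consistent with the explicit gap description $W(P_\infty) = \N_0 \setminus \{1, 3, \dots, 2g-1\}$ given in \cref{ex:hyperelliptic}.
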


Now, we consider the two following cases depending on the degree $d$ of $f(x)$ in \cref{thm:Kummer_special}. 

\begin{example}[Elliptic Function Field]\label{ex:elliptic}
Let $d = 3$ in \cref{thm:Kummer_special}. Namely, let $f(x)$ be a square-free polynomial of degree $3$ in $K[x]$ and suppose that $\chr(K)$ is odd. Then $F = K(x, y)$ with $y^2 = f(x)$ is an elliptic function field with genus $g = 1$. The place $P_\infty$ is the unique common pole of $x$ and $y$ in $F$. Indeed, $(x)_\infty = 2P_\infty$ and $(y)_\infty = 3P_\infty$, and the Weierstrass semigroup of $P_\infty$ is $W(P_\infty) = \N_0 \setminus \{1\}$.

For any nonnegative integer $k \geq 2g - 1 = 1$, the dimension of $\mathcal{L}(kP_\infty)$ is $\ell(kP_\infty) = k$. Note that the functions $x^iy^j$ with nonnegative integers $i, j$ such that $2i + 3j \leq k$ forms a basis for $\mathcal{L}(kP_\infty)$.
\end{example}

\begin{example}[Hyperelliptic Function Field]\label{ex:hyperelliptic}
Again, let $\chr(K)$ be odd and let $d \geq 5$ be an odd integer in \cref{thm:Kummer_special}. Let $f(x)$ be a square-free polynomial of degree $d \geq 5$ in $K[x]$. Then $F = K(x, y)$ with $y^2 = f(x)$ is a hyperelliptic function field with genus $g = \frac{d - 1}{2}$. The place $P_\infty$ is the unique common pole of $x$ and $y$ in $F$ and the pole divisors are $(x)_\infty = 2P_\infty$ and $(y)_\infty = dP_\infty$. The Weierstrass semigroup of $P_\infty$ is $W(P_\infty) = \N_0 \setminus \{1, 3, \dots, 2g - 1\}$. For any nonnegative integer $k \geq 2g - 1$ the dimension of $\mathcal{L}(kP_\infty)$ is $\ell(kP_\infty) = k + 1 - g$.
\end{example}

Finally, in our construction we need an estimation for the number of rational places $N \coloneqq \lvert \P_F^1 \rvert$ when $K = \F_q$ is a finite field. Clearly, in the rational function field $\F_q(x)$ there are exactly $q + 1$ rational places. In general, it is difficult to determine $N$ in $F/\F_q$. Here we consider the estimate
\begin{equation*}
    \lvert N - (q + 1) \rvert \leq 2g\sqrt{q},
\end{equation*}
which is known as the \emph{Hasse--Weil bound}.

\subsection{Algebraic Geometry Codes}\label{sec:AG_codes}

A linear code $\mathcal{C}$ over a finite field $\F_q$ is defined as a $k$-dimensional subspace of $\F_q^n$ endowed with the Hamming metric. Such a code is denoted by its parameters $[n, k, d]_q$, where $d$ is the \emph{minimum distance} of the code, \emph{i.e.}, the minimal number of nonzero coordinates in a nonzero codeword. These parameters are related through the well-known \emph{Singleton bound} $k + d \leq n + 1$. Codes that achieve this bound with equality are called \emph{maximum distance separable (MDS)}. Another description of linear codes can be given as the image of an $\F_q$-linear map between two finite-dimensional vector spaces over $\F_q$.

\begin{example}[Reed--Solomon Codes]
Let $\F_q[x]_{< k} \coloneqq \{f \in \F_q[x] \mid \deg(f) < k \}$ be the set of polynomials of degree at most $k - 1$ in $\F_q[x]$ where $0 \leq k \leq n$. Consider $n$ distinct elements $\alpha = \{\alpha_1, \dots, \alpha_n\}$ in $\F_q$. The evaluation map $\ev_\alpha \colon \F_q[x]_{< k} \to \F_q^n$ defined by
\begin{equation*}
    f \mapsto (f(\alpha_1), \dots, f(\alpha_n))
\end{equation*}
is an $\F_q$-linear map with $\ker(\ev_\alpha) = \{0\}$. Then
\begin{equation*}
    \RS_k(\alpha) \coloneqq \ev_\alpha(\F_q[x]_{< k })
\end{equation*}
is an $[n, k, d]_q$ linear code and is called a Reed--Solomon (RS) code. It is well-known that Reed--Solomon codes are MDS, \emph{i.e.}, their minimum distance is $d = n - k + 1$.
\end{example}

One of the breakthrough generalizations of Reed--Solomon constructions was given by Goppa in the 1970s \cite{goppa1977codes}. Like in the case of Reed--Solomon codes, it can be easily observed that the parameters and properties of such constructions can be determined and studied by the notions in algebraic function fields. More explicitly, we first fix an algebraic function field $F/\F_q$. To define a linear map, we consider a Riemann--Roch space $\mathcal{L}(G)$ of a divisor $G \in \Div(F)$ as the domain vector space and $n$ distinct rational places $\mathcal{P} = \{P_1, \dots, P_n\}$ that are not in the support of $G$. Then a linear map can be defined by evaluating functions $f \in \mathcal{L}(G)$ at the given places. Indeed, the map $\ev_\mathcal{P} \colon \mathcal{L}(G) \to \F_q^n$ defined by
\begin{equation*}
    f \mapsto (f(P_1), \dots , f(P_n))
\end{equation*}
is well-defined as the $P_i$'s are not in the support of $G$. We define an \emph{algebraic geometry (AG) code} as the image of such an evaluation map, \emph{i.e.}, $\mathcal{C}_\mathcal{L}(\mathcal{P}, G) = \ev_\mathcal{P}(\mathcal{L}(G))$. Furthermore, we note that the evaluation map is $\F_q$-linear with kernel $\mathcal{L}(G - D)$, where $D \coloneqq P_1 + P_2 + \dots + P_n$ is a divisor of degree $n$. By this setting, we are ready to present the parameters of an AG code as follows.

\begin{theorem}[{\cite[Theorem 2.2.2]{stichtenoth2009algebraic}}]\label{thm:AG_code_dimension}
$\mathcal{C}_\mathcal{L}(\mathcal{P}, G)$ is an $[n, k, d]_q$ linear code with
\begin{equation*}
    k = \ell(G) - \ell(G - D) \quad \text{and} \quad d \geq n - \deg(G).
\end{equation*}
\end{theorem}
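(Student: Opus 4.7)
The plan is to identify $\mathcal{C}_\mathcal{L}(\mathcal{P}, G)$ as the image of the $\F_q$-linear evaluation map $\ev_\mathcal{P}\colon \mathcal{L}(G)\to\F_q^n$, so that linearity is automatic and the dimension claim follows from the rank--nullity theorem. Concretely, I would argue $\dim \mathcal{C}_\mathcal{L}(\mathcal{P}, G) = \dim \mathcal{L}(G) - \dim \ker(\ev_\mathcal{P}) = \ell(G) - \ell(G-D)$. This reduces the theorem to a single nontrivial step: showing that $\ker(\ev_\mathcal{P}) = \mathcal{L}(G-D)$.

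First I would check the map is well defined. Since the places $P_1,\dots,P_n$ are rational and not in $\supp(G)$, one has $v_{P_i}(G)=0$, and for any $f\in\mathcal{L}(G)$ the defining inequality $(f)+G\geq 0$ gives $v_{P_i}(f)\geq 0$, so $f$ has no pole at $P_i$ and the residue $f(P_i)\in F_{P_i}=\F_q$ makes sense. Linearity of $\ev_\mathcal{P}$ is immediate from the $\F_q$-linearity of residues.

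Next I would compute the kernel. A function $f\in\mathcal{L}(G)$ lies in $\ker(\ev_\mathcal{P})$ if and only if $f(P_i)=0$ for all $i$, i.e.\ $v_{P_i}(f)\geq 1$ for all $i$. Since $v_{P_i}(D)=1$ and $v_{P_i}(G)=0$, combining $v_{P_i}(f)\geq 1$ at each $P_i$ with $(f)+G\geq 0$ at every other place yields $(f)+G-D\geq 0$, hence $f\in\mathcal{L}(G-D)$. Conversely, if $f\in\mathcal{L}(G-D)$, then at each $P_i$ we have $v_{P_i}(f)\geq -v_{P_i}(G-D)=1$, so $f(P_i)=0$ and $f\in\ker(\ev_\mathcal{P})$. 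This gives $\ker(\ev_\mathcal{P})=\mathcal{L}(G-D)$.

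Applying rank--nullity to $\ev_\mathcal{P}$ then gives $k=\dim\mathrm{Im}(\ev_\mathcal{P})=\ell(G)-\ell(G-D)$, and the code has length $n$ by construction. The only mildly subtle point, and the one I would be most careful about, is the bookkeeping at the places $P_i$ that simultaneously enforces the $\mathcal{L}(G)$ condition away from $\mathcal{P}$ and the vanishing condition on $\mathcal{P}$; everything else is a direct consequence of the definitions and of the fact that Riemann--Roch spaces are finite-dimensional $\F_q$-vector spaces.
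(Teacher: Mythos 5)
Your proposal is correct and follows exactly the standard argument the paper (and Stichtenoth) relies on: identify the code as the image of the evaluation map on $\mathcal{L}(G)$, compute the kernel to be $\mathcal{L}(G-D)$ by comparing valuations at the places in $\mathcal{P}$ with those elsewhere, and conclude by rank--nullity. The only tiny point worth making explicit, since $D\geq 0$, is that $\mathcal{L}(G-D)\subseteq\mathcal{L}(G)$, so the converse inclusion in the kernel computation really does land inside the domain of $\ev_\mathcal{P}$.
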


We can easily observe that as $\deg(G) < n$, the Riemann--Roch space $\mathcal{L}(G - D)$ will be the trivial space, and thus the dimension of the AG code is exactly $\ell(G)$. The following example shows how Reed--Solomon codes can be seen as a special case of AG codes.

\begin{example}[Reed--Solomon Codes as AG Codes]\label{ex:RS_as_AG_code}
Let us consider the rational function field $F = \F_q(x)$ and the pole $P_\infty$ of $x \in F$. For any nonnegative integer $k \leq n$, we know that $\mathcal{L}((k-1)P_\infty) = {\F_q[x]}_{< k}$. Now as $n \leq q$, we can consider $n$ distinct rational places $\mathcal{P} = \{P_{x - \alpha_1}, \dots, P_{x - \alpha_n}\}$ in $\P_F^1$ to denote points $\alpha_1, \ldots, \alpha_n$ in $\F_q$. Then the evaluation map yields a Reed--Solomon code, $\mathcal{C}_\mathcal{L}(\mathcal{P}, (k - 1)P_\infty)$, with parameters $[n, k]_q$. Note that the length $n$ and the dimension $k$ are the number of the rational places $P_{x - {\alpha}_i} \in \mathcal{P}$ and the dimension of the Riemann--Roch space $\mathcal{L}((k - 1)P_\infty)$, respectively.
\end{example}

In this paper, we are interested in one-point AG codes, which are codes coming from $\mathcal{L}(kP)$ with a rational place $P$ in a function field $F/K$. Note that Reed--Solomon codes presented in \cref{ex:RS_as_AG_code} are a special case of one-point AG codes.

Finally, we observe how the function field structure provides more information on the star product algebraic geometry codes by following \cite{couvreur2017cryptanalysis}. To this end, recall that the star product of two linear codes $\mathcal{C}, \mathcal{D} \subseteq \F_q^n$ is defined as
\begin{equation*}
    \mathcal{C} \star \mathcal{D} = \spn \{ c \star d \mid c \in \mathcal{C}, d \in \mathcal{D} \},
\end{equation*}
where $c \star d$ denotes the componentwise product. We study the star products of algebraic geometry codes with the following theorem. We define the product of two subspaces $V, W \subseteq F$ of a function field $F$ as
\begin{equation*}
    V \cdot W = \spn \{ fg \mid f \in V, g \in W \},
\end{equation*}
where $fg$ is the ordinary product of field elements.

\begin{theorem}\label{thm:product_of_RR_spaces}
Let $F/\F_q$ be a function field of genus $g$. Then, the product of Riemann--Roch spaces $\mathcal{L}(G)$ and $\mathcal{L}(H)$ satisfies
\begin{equation*}
    \mathcal{L}(G) \cdot \mathcal{L}(H) \subseteq \mathcal{L}(G + H).
\end{equation*}
Furthermore, equality holds if $\deg(G) \geq 2g$ and $\deg(H) \geq 2g + 1$.
\end{theorem}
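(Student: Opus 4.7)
The inclusion is a direct valuation computation: for $f \in \mathcal{L}(G)$ and $h \in \mathcal{L}(H)$, at every place $P$ one has
\begin{equation*}
    v_P(fh) = v_P(f) + v_P(h) \geq -v_P(G) - v_P(H) = -v_P(G+H),
\end{equation*}
so $fh \in \mathcal{L}(G+H)$, and passing to linear spans yields the containment for the full product.

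For equality, my plan is the classical base-point-free pencil trick. First, $\deg(G) \geq 2g$ forces the linear series $|G|$ to be base-point free: by Riemann--Roch one has $\ell(G) - \ell(G - P) = 1$ for every rational place $P$, since both $\deg(G)$ and $\deg(G - P)$ are at least $2g - 1$. I can then pick a two-dimensional base-point-free subspace $\langle x_0, x_1 \rangle \subseteq \mathcal{L}(G)$ and consider the sequence
\begin{equation*}
    0 \longrightarrow \mathcal{L}(H - G) \xrightarrow{f \mapsto (x_1 f,\, -x_0 f)} \mathcal{L}(H)^{\oplus 2} \xrightarrow{(f_0, f_1) \mapsto x_0 f_0 + x_1 f_1} \mathcal{L}(G+H).
\end{equation*}
Exactness on the left is precisely where base-point freeness enters: a relation $x_0 f_0 + x_1 f_1 = 0$ forces $(f_0, f_1) = (x_1 g, -x_0 g)$ for some $g \in F$, and a placewise check shows that the membership $g \in \mathcal{L}(H - G)$ is equivalent to $\min(v_P(x_0), v_P(x_1)) = -v_P(G)$ at every $P$, which is exactly the base-point freeness of the pencil. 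I would then obtain surjectivity of the multiplication map by identifying the cokernel with a subspace of $H^1$ of the sheaf $\mathcal{L}(H - G)$ and killing it via Serre duality. Surjectivity gives $\langle x_0, x_1 \rangle \cdot \mathcal{L}(H) = \mathcal{L}(G + H)$, and since this space is sandwiched between $\mathcal{L}(G) \cdot \mathcal{L}(H)$ and $\mathcal{L}(G + H)$, equality follows.

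The main obstacle I expect is bridging the gap between the ``naive'' surjectivity criterion furnished by a single application of the pencil trick, which via Serre duality requires $\deg(H) \geq \deg(G) + 2g - 1$, and the stated hypothesis $\deg(H) \geq 2g + 1$. Closing this gap is typically done either by iterating the pencil construction or by a Castelnuovo--Mumford-style induction on $\deg(G)$, reducing to the base case where the one-step trick applies directly. That reduction carries the real content; once surjectivity is secured, the rest of the argument is routine Riemann--Roch dimension bookkeeping, which reproduces $\ell(G+H) = \deg(G) + \deg(H) + 1 - g$ as the target dimension.
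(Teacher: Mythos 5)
The paper does not prove this theorem itself; it states it by citing \cite{couvreur2017cryptanalysis}, so there is no internal proof to compare against. Evaluating your proposal on its own merits: the inclusion argument via valuations is correct and complete, and your identification of the kernel of the Koszul-type map $\mathcal{L}(H)^{\oplus 2} \to \mathcal{L}(G+H)$ with $\mathcal{L}(H-G)$ via base-point-freeness of the pencil $\langle x_0, x_1 \rangle$ is also correct, as is your computation that a single application of the pencil trick with a pencil drawn from $\mathcal{L}(G)$ yields surjectivity only under $\deg(H) \geq \deg(G) + 2g - 1$ (via $H^1(H-G) = 0$).

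The gap is in the final paragraph. You propose to close the difference by ``reducing to the base case where the one-step trick applies directly,'' but under the theorem's hypotheses no such base case exists once $g \geq 2$. Reducing $\deg(G)$ by one place at a time (using an $f \in \mathcal{L}(G)$ and $h \in \mathcal{L}(H)$ achieving $v_P(fh) = -v_P(G+H)$ to supply the missing codimension-one piece) is sound while $\deg(G) > 2g$ and $\deg(H) > 2g+1$, so the induction funnels down to the minimal pair $\deg(G) = 2g$, $\deg(H) = 2g+1$. There the one-step criterion reads $2g+1 \geq 2g + (2g-1) = 4g-1$, which fails for all $g \geq 2$; symmetrically, drawing the pencil from $\mathcal{L}(H)$ instead requires $\deg(G) \geq \deg(H) + 2g - 1$, which is even worse. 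So the pencil trick cannot close the base case on its own, and ``iterating the pencil construction'' does not repair this: each pencil's cokernel still maps into the same $H^1(H-G)$, which is genuinely nonzero here. What actually carries the base case in Mumford's and Castelnuovo's arguments is a different mechanism entirely --- one fixes a section $s_0 \in \mathcal{L}(G)$ whose divisor $(s_0) + G$ is a sum of points in sufficiently general position, restricts $\mathcal{L}(G+H)$ along the exact sequence $0 \to \mathcal{L}(H) \xrightarrow{\cdot s_0} \mathcal{L}(G+H) \to \bigoplus_{P \in (s_0)+G}\F_q \to 0$ (exact on the right because $H^1(H) = 0$), and then shows the quotient is covered by products using the non-speciality and base-point-freeness that the degree bounds provide. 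That general-position restriction argument, not Koszul/Serre bookkeeping, is where the content of the equality lies, and your proposal does not supply it.
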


The above theorem gives the following property for the star product of algebraic geometry codes.

\begin{remark}\label{rmk:star_product_of_AG_codes}
By applying \cref{thm:product_of_RR_spaces} for two AG codes $\mathcal{C}_\mathcal{L}(\mathcal{P}, G)$ and $\mathcal{C}_\mathcal{L}(\mathcal{P}, H)$ coming from the Riemann--Roch spaces $\mathcal{L}(G)$ and $\mathcal{L}(H)$ with $\deg(G) \geq 2g$ and $\deg(G) \geq 2g + 1$, we obtain
\begin{equation*}
    \mathcal{C}_\mathcal{L}(\mathcal{P}, G) \star \mathcal{C}_\mathcal{L}(\mathcal{P}, H) = \mathcal{C}_\mathcal{L}(\mathcal{P}, G + H).
\end{equation*}
\end{remark}

\begin{example}[Star Product of Reed--Solomon Codes] \label{ex:star_product_of_RS_codes}
Like in \cref{ex:RS_as_AG_code}, let us consider two Reed--Solomon codes $\mathcal{C}_\mathcal{L}(\mathcal{P}, (k - 1)P_\infty)$ and $\mathcal{C}_\mathcal{L}(\mathcal{P}, (k' - 1)P_\infty)$ with $1 \leq k \leq n$ and $2 \leq k' \leq n$. Since $g = 0$ for the rational function field we have $k - 1 \geq 2g $ and $k' - 1 \geq 2g + 1$. Finally, by \cref{rmk:star_product_of_AG_codes} we can see that
\begin{equation*}
    \mathcal{C}_\mathcal{L}(\mathcal{P}, (k - 1)P_\infty)\star \mathcal{C}_\mathcal{L}(\mathcal{P}, (k' - 1)P_\infty) = \mathcal{C}_\mathcal{L}(\mathcal{P}, (k + k' - 2)P_\infty).
\end{equation*}
Note that the evaluation map is not necessarily injective, therefore, one should note that the dimension of the Reed--Solomon code $\mathcal{C}_\mathcal{L}(\mathcal{P}, (k + k' - 2)P_\infty)$ is $\min\{k + k' - 1, n\}$.
\end{example}

\subsection{Linear SDMM}\label{sec:linear_SDMM}

In SDMM we consider the problem of distributing a matrix multiplication task to $N$ workers. The information contained in the matrices should be kept secret from the workers even if at most $X$ of them \emph{collude}, \emph{i.e.}, share their information in an attempt to infer information about the matrices. A general framework for linear SDMM schemes was presented in \cite{makkonen2022general}. The matrices $A$ and $B$ are partitioned to $mp$ and $np$ pieces 
\begin{equation*}
    A = \begin{pmatrix}
        A_{11} & \cdots & A_{1p} \\
        \vdots & \ddots & \vdots \\
        A_{m1} & \cdots & A_{mp}
    \end{pmatrix}, \quad
    B = \begin{pmatrix}
        B_{11} & \cdots & B_{1n} \\
        \vdots & \ddots & \vdots \\
        B_{p1} & \cdots & B_{pn}
    \end{pmatrix}
\end{equation*}
such that their product can be expressed as
\begin{equation*}
    AB = \begin{pmatrix}
        C_{11} & \cdots & C_{1n} \\
        \vdots & \ddots & \vdots \\
        C_{m1} & \cdots & C_{mn}
    \end{pmatrix},
\end{equation*}
where $C_{ik} = \sum_{j=1}^p A_{ij}B_{jk}$. In this paper we are interested in the case of $p = 1$, which is known as the \emph{outer product partitioning}.

The matrices $A$ and $B$ are encoded using linear codes $\mathcal{C}_A$ and $\mathcal{C}_B$ of length $N$ and dimensions $mp + X$ and $np + X$. In particular, if $G_A$ and $G_B$ are generator matrices of $\mathcal{C}_A$ and $\mathcal{C}_B$, respectively, then the encoded matrices are
\begin{align*}
    \tilde{A} &= (\tilde{A}_1, \dots, \tilde{A}_N) = (A_1, \dots, A_{mp}, R_1, \dots, R_X)G_A \\
    \tilde{B} &= (\tilde{B}_1, \dots, \tilde{B}_N) = (B_1, \dots, B_{np}, S_1, \dots, S_X)G_B,
\end{align*}
where $R_1, \dots, R_X$ and $S_1, \dots, S_X$ are matrices of suitable size whose entries are chosen uniformly at random from $\F_q$. The workers compute the products $\tilde{C}_i = \tilde{A}_i\tilde{B}_i$, which corresponds to the entries in the star product $\tilde{A} \star \tilde{B}$ consisting of componentwise matrix products.

\begin{definition}
A linear SDMM scheme is said to be \emph{decodable} if there are coefficients $\Lambda_i \in \F_q^{m \times n}$ that are independent of $A$, $B$ and the random matrices $R_k$ and $S_{k'}$ such that
\begin{equation*}
    AB = \sum_{i \in [N]} \Lambda_i \otimes \tilde{C}_i
\end{equation*}
for all $A$ and $B$.
\end{definition}

An SDMM scheme is said to be \emph{$X$-secure} if any $X$ colluding workers cannot extract any information about $A$ or $B$ from their shares. In terms of mutual information
\begin{equation*}
    I(\bm{A}, \bm{B}; \tilde{\bm{A}}_\mathcal{X}, \tilde{\bm{B}}_\mathcal{X}) = 0
\end{equation*}
for all $\mathcal{X} \subseteq [N]$ with $\lvert \mathcal{X} \rvert = X$. Here, the bold symbols correspond to the random variables of the nonbold symbols and $\tilde{\bm{A}}_\mathcal{X} = \{\tilde{\bm{A}}_i \mid i \in \mathcal{X}\}$ denotes the shares of the workers indexed by $\mathcal{X}$.

Let $\mathcal{C}_A^\security$ and $\mathcal{C}_B^\security$ denote the linear codes that are used to encode just the random part in the linear SDMM scheme, \emph{i.e.}, the codes spanned by the rows of the generator matrices $G_A$ and $G_B$ that correspond to the random matrices. One way to show the security of the linear SDMM scheme is by the following theorem. The proof of the theorem uses standard arguments from information theory and relies on the fact that all the $X \times X$ submatrices of the generator matrices are invertible.

\begin{theorem}[{\cite[Theorem 1]{makkonen2022general}}]\label{thm:linear_SDMM_security}
A linear SDMM scheme is $X$-secure if $\mathcal{C}_A^\security$ and $\mathcal{C}_B^\security$ are MDS codes.
\end{theorem}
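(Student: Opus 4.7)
The plan is to decouple the $A$-side from the $B$-side and then, on each side, use the MDS property of the security sub-code to argue that the colluders' shares are uniformly distributed independently of the inputs. Throughout, I treat the encoded share $\tilde{\bm A}_i$ as an $\F_q$-linear combination (with scalar coefficients taken from the $i$-th column of $G_A$) of the matrix blocks $\bm A_j$ and the random matrices $\bm R_k$, and similarly for $B$.

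First I would observe that the random matrices $\bm R_1,\dots,\bm R_X$ and $\bm S_1,\dots,\bm S_X$ are drawn independently and independently of $(\bm A, \bm B)$. Combined with the assumption that $\bm A$ and $\bm B$ are independent, this makes the pair $(\bm A, \tilde{\bm A}_\mathcal{X})$ independent of $(\bm B, \tilde{\bm B}_\mathcal{X})$, since the first pair is a deterministic function of $(\bm A, \bm R)$ and the second of $(\bm B, \bm S)$. A standard splitting of mutual information then gives
\begin{equation*}
    I(\bm A, \bm B; \tilde{\bm A}_\mathcal{X}, \tilde{\bm B}_\mathcal{X}) = I(\bm A; \tilde{\bm A}_\mathcal{X}) + I(\bm B; \tilde{\bm B}_\mathcal{X}),
\end{equation*}
so it suffices to prove $I(\bm A; \tilde{\bm A}_\mathcal{X}) = 0$; the $B$-side is symmetric.

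Next I would write the generator matrix $G_A \in \F_q^{(mp + X) \times N}$ in block form, with the first $mp$ rows corresponding to the data and the bottom $X$ rows forming a generator matrix of $\mathcal{C}_A^\security$. Restricting to the columns indexed by $\mathcal{X}$ yields
\begin{equation*}
    \tilde{\bm A}_\mathcal{X} = (\bm A_1, \dots, \bm A_{mp})\, G_A^{\text{data}}\big|_\mathcal{X} \;+\; (\bm R_1, \dots, \bm R_X)\, G_A^{\security}\big|_\mathcal{X},
\end{equation*}
where $G_A^{\security}|_\mathcal{X} \in \F_q^{X \times X}$. Because $\mathcal{C}_A^\security$ is MDS of dimension $X$, every $X$ columns of its generator matrix are $\F_q$-linearly independent, so $G_A^{\security}|_\mathcal{X}$ is invertible, and the hypothesis $|\mathcal{X}| = X$ is precisely what makes this argument go through.

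Finally I would condition on $\bm A = a$. The first summand above is then a fixed tuple of matrices, while the second summand is the image of the uniform random tuple $(\bm R_1,\dots,\bm R_X)$ under multiplication by the invertible matrix $G_A^{\security}|_\mathcal{X}$; entrywise this is an invertible $\F_q$-linear bijection, so it preserves the uniform distribution. Adding the fixed first summand keeps the distribution uniform, so $\tilde{\bm A}_\mathcal{X} \mid \bm A = a$ is uniform on the ambient matrix space and, in particular, its law does not depend on $a$. Hence $\tilde{\bm A}_\mathcal{X}$ is independent of $\bm A$, giving $I(\bm A; \tilde{\bm A}_\mathcal{X}) = 0$, and by symmetry the same holds on the $B$-side, yielding $X$-security. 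The only subtle point is the bookkeeping that $G_A^{\security}|_\mathcal{X}$ acting on a tuple of matrices with i.i.d.\ uniform $\F_q$-entries produces another tuple of matrices with i.i.d.\ uniform $\F_q$-entries; this is immediate once one applies the action coordinate-by-coordinate inside each matrix block, but it is the one step to state carefully.
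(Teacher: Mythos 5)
Your proof is correct and follows exactly the route the paper sketches: the paper does not reproduce the proof (it cites \cite{makkonen2022general}), but it describes it as ``standard arguments from information theory'' that ``rely on the fact that all the $X \times X$ submatrices of the generator matrices are invertible,'' and that is precisely your chain of deductions — MDS property $\Rightarrow$ the $X\times X$ restriction $G_A^{\security}|_\mathcal{X}$ is invertible $\Rightarrow$ the random part acts as an entrywise one-time pad $\Rightarrow$ conditional uniformity $\Rightarrow$ zero mutual information.

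One minor refinement worth recording: your opening reduction to $I(\bm A; \tilde{\bm A}_\mathcal{X}) + I(\bm B; \tilde{\bm B}_\mathcal{X})$ invokes $\bm A \perp \bm B$, which is a common but not universal SDMM assumption. You can avoid it entirely: since $\bm R \perp \bm S$ and each is independent of $(\bm A, \bm B)$, the argument you give for each side shows that, conditionally on $(\bm A, \bm B) = (a, b)$, the pair $(\tilde{\bm A}_\mathcal{X}, \tilde{\bm B}_\mathcal{X})$ is uniform on the product space, for every $(a,b)$. Hence $(\tilde{\bm A}_\mathcal{X}, \tilde{\bm B}_\mathcal{X})$ is independent of $(\bm A, \bm B)$ outright and $I(\bm A, \bm B; \tilde{\bm A}_\mathcal{X}, \tilde{\bm B}_\mathcal{X}) = 0$ follows without any splitting of the mutual information.
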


To motivate our construction we consider the following example of linear SDMM based on the GASP scheme presented in \cite{d2020gasp}.

\begin{example}[GASP]\label{ex:GASP}
The matrices $A$ and $B$ are split into $m = n = 3$ submatrices using the outer product partitioning. We wish to protect against $X = 2$ colluding workers. Define the polynomials
\begin{align*}
    f(x) &= A_1 + A_2x + A_3x^2 + R_1x^9 + R_2x^{12}, \\
    g(x) &= B_1 + B_2x^3 + B_3x^6 + S_1x^9 + S_2x^{10},
\end{align*}
where $R_1, R_2, S_1, S_2$ are matrices of appropriate size that are chosen uniformly at random over $\F_q$. The exponents are chosen carefully so that the total number of workers needed is as low as possible. Let $\alpha_1, \dots, \alpha_N \in \F_q^\times$ be distinct nonzero points and evaluate the polynomials $f(x)$ and $g(x)$ at these points to get the encoded matrices
\begin{equation*}
    \widetilde{A}_i = f(\alpha_i), \quad \widetilde{B}_i = g(\alpha_i).
\end{equation*}
The $i$th encoded matrices are sent to the $i$th worker node. The workers compute the matrix products $\widetilde{C}_i = \widetilde{A}_i \widetilde{B}_i$ and send these  to the user. The user receives evaluations of the polynomial $h(x) = f(x)g(x)$ from each worker. Using the definition of $f(x)$ and $g(x)$ we can write out the coefficients of $h(x)$ as
\begin{align*}
    h(x) &= A_1B_1 + A_2B_1x + A_3B_1x^2 + A_1B_2x^3 + A_2B_2x^4 \\ 
    &+ A_2B_3x^5 + A_1B_3x^6 + A_2B_3x^7 + A_3B_3x^8 \\
    &+ (\text{terms of degree $\geq 9$}).
\end{align*}
We notice that the coefficients of the first $9$ terms are exactly the submatrices we wish to recover. We may study the coefficients that appear in the product using the \emph{degree table} in \cref{tab:degree_table}. As the degree table contains 18 distinct elements, we need 18 responses from the workers, provided that the corresponding linear equations are solvable. In this case, the number of workers is $N = 18$.

The general choice of the exponents in the polynomials $f(x)$ and $g(x)$ is explained in \cite{d2021degree}. The security of the scheme is proven by showing that the condition of \cref{thm:linear_SDMM_security} is satisfied for a suitable choice of evaluation points.

\begin{table}[t]
    \centering
    \begin{tabular}{C|CCC|CC}
           &  0 &  3 &  6 &  9 & 10 \\ \hline
         0 &  0 &  3 &  6 &  9 & 10 \\
         1 &  1 &  4 &  7 & 10 & 11 \\
         2 &  2 &  5 &  8 & 11 & 12 \\ \hline
         9 &  9 & 12 & 15 & 18 & 19 \\
        12 & 12 & 15 & 18 & 21 & 22
    \end{tabular}
    \caption{Degree table of the GASP scheme}
    \label{tab:degree_table}
\end{table}
\end{example}

It may be desirable to recover the result from a subset of the workers such that \emph{stragglers}, \emph{i.e.}, slow or unresponsive workers, do not negatively affect the computation time. The minimal number of responses needed to decode the result in the worst case scenario is known as the \emph{recovery threshold}. The recovery threshold of the scheme described in the above example is 18, which equals the number of workers. This means that the scheme is not able to tolerate stragglers.

\section{Using Algebraic Geometry Codes in Linear SDMM}\label{sec:AG_codes_in_SDMM}

Extending \cref{ex:GASP} with algebraic geometry codes is based on the fact that the exponents of the monomials $x^i$ can also be chosen with respect to some pole numbers of the pole of $x$ in the rational function field $\F_q(x)$. This approach leads us to set up a pole number table that is similar to the degree table in \cref{tab:degree_table}. Furthermore, we consider certain subcodes of AG codes, which is a neat generalization of the subcodes of Reed--Solomon codes used in \cref{ex:GASP}.

We consider one-point algebraic geometry codes from a divisor of the form $G = kP$ for some $P \in \P_F^1$, and subcodes of such codes.  This corresponds to the extension of the GASP codes, where the encoding is done by certain subcodes of Reed--Solomon codes. In particular, we choose some pole numbers in $W(P)$, say
\begin{equation*}
    \phi = (\phi_1, \dots, \phi_{m + X}), \quad \gamma = (\gamma_1, \dots, \gamma_{n + X}).
\end{equation*}
Additionally, we choose functions $f_1, \dots, f_{m + X}$ and $g_1, \dots, g_{n + X}$ such that $(f_j)_\infty = \phi_j P$ and $(g_{j'})_\infty = \gamma_{j'} P$. The matrices are partitioned according to the partition in \cref{sec:linear_SDMM} with $p = 1$, \emph{i.e.}, the outer product partition. We encode our matrices by computing the following linear combinations
\begin{align*}
    f &= \sum_{k=1}^X R_k f_k + \sum_{j=1}^m A_j f_{X + j}, \\
    g &= \sum_{k'=1}^X S_{k'} g_{k'} + \sum_{j'=1}^n B_{j'} g_{X + j'}. 
\end{align*}
We want the encodings to be injective (by \cite[Proposition 3]{makkonen2022general}), which means that the collection $f_1, \dots, f_{m + X}$ has to be linearly independent. This can be achieved, for instance, by choosing the pole numbers $\phi_1, \dots, \phi_{m + X}$ to be distinct. Similarly, we want the pole numbers $\gamma_1, \dots, \gamma_{n + X}$ to be distinct. The goal is to design the pole numbers $\phi$ and $\gamma$ such that we can extract the submatrix products $A_jB_{j'}$ from the product $h = fg$. We may study the possible pole numbers that appear in the product by considering the \emph{pole number table} $\phi \oplus \gamma$, which is defined as the outer sum of $\phi$ and $\gamma$. Recall that the pole divisor of the product $f_jg_{j'}$ is $(f_jg_{j'})_\infty = (\phi_j + \gamma_{j'})P$. This is a direct generalization of the degree tables introduced in \cite{d2021degree}.

\begin{remark}
Notice that our encoding is done such that the random matrices correspond to the functions $f_1, \dots, f_X$ and $g_1, \dots, g_X$ and the matrix partitions correspond to the functions $f_{m + 1}, \dots, f_{m + X}$ and $g_{n + 1}, \dots, g_{n + X}$. This is merely a notational difference to earlier constructions, but should be noted in the next section where the pole number table is analyzed.
\end{remark}

\section{Construction of the PoleGap Scheme}\label{sec:construction}

Let $\F_q$ be a finite field with odd characteristic. Fix the partitioning parameters $m \geq 2$ even, $n \geq 1$ and the collusion parameter $X \geq 1$, and set
\begin{equation*}
    d = m(n - 1) + 2X - 1.
\end{equation*}
Let $f(x) = \prod_{i=1}^d (x - \alpha_i) \in \F_q[x]$ be a square-free polynomial of degree $d$ and consider the Kummer extension $F/\F_q$ defined by $y^2 = f(x)$. By \cref{thm:Kummer_special} this function field has genus
\begin{equation*}
    g = \frac{d - 1}{2} = \frac{1}{2}\left(m(n-1) + 2X - 2\right).
\end{equation*}
Let $P_\infty$ be the unique pole of $x \in F$. Then we have that $(x)_\infty = 2P_\infty$. Hence, the Weierstrass semigroup at $P_\infty$ is
\begin{equation*}
    W(P_\infty) = \{ 0, 2, 4, \dots, 2g, 2g + 1, \dots \},
\end{equation*}
\emph{i.e.}, all even integers are pole numbers, as well as all integers at least $2g$.

Consider the following sequences of natural numbers
\begin{align*}
    \phi &= (0, 2, 4, \dots, 2X - 2, d, d + 1, \dots, d + m - 1) \in \N_0^{m + X} \\
    \gamma &= (0, 2, 4, \dots, 2X - 2, m + 2X - 2, 2m + 2X - 2, \dots, mn + 2X - 2) \in \N_0^{n + X}.
\end{align*}
We see that all the elements of $\gamma$ are even as $m$ is assumed to be even, so all the elements of $\gamma$ are pole numbers. Similarly, the first $X$ elements of $\phi$ are even, so they are pole numbers. Finally, the last $m$ elements of $\phi$ are at least $d = 2g + 1$, so they are also pole numbers. Therefore, we can choose functions $f_1, \dots, f_{m + X} \in F$ and $g_1, \dots, g_{n + X} \in F$ such that $(f_j)_\infty = \phi_j P_\infty$ and $(g_{j'})_\infty = \gamma_{j'} P_\infty$. We will consider the subspaces
\begin{equation*}
    L_A = \spn \{f_1, \dots, f_{m+X}\}, \quad L_B = \spn \{g_1, \dots, g_{n+X}\}.
\end{equation*}
The maximal pole number of $fg$ for $f \in L_A$ and $g \in L_B$ is $\phi_{m + X} + \gamma_{n + X} = 2mn + 4X - 4$. Hence, we consider the divisor $G = (2mn + 4X - 4)P_\infty$. We see that $L_A, L_B \subseteq \mathcal{L}(G)$ and $L_A \cdot L_B \subseteq \mathcal{L}(G)$ by choice of $G$.

Let $\hat{\mathcal{P}} \subseteq \P_F^1 \setminus \{P_\infty\}$ be a set of places of size $\hat{N} = \lvert \hat{\mathcal{P}} \rvert$. By considering the evaluation map $\ev_{\hat{\mathcal{P}}} \colon \mathcal{L}(G) \to \F_q^{\hat{N}}$ we define the codes $\hat{\mathcal{C}}_A = \ev_{\hat{\mathcal{P}}}(L_A)$ and $\hat{\mathcal{C}}_B = \ev_{\hat{\mathcal{P}}}(L_B)$. The star product of these codes is
\begin{equation*}
    \hat{\mathcal{C}}_A \star \hat{\mathcal{C}}_B = \ev_{\hat{\mathcal{P}}}(L_A \cdot L_B) \subseteq \ev_{\hat{\mathcal{P}}}(\mathcal{L}(G)) = \mathcal{C}_\mathcal{L}(\hat{\mathcal{P}}, G).
\end{equation*}
Assuming that $\hat{N} > \deg G = 2mn + 4X - 4$ we get
\begin{align*}
    \dim \hat{\mathcal{C}}_A \star \hat{\mathcal{C}}_B &\leq \dim \mathcal{C}_\mathcal{L}(\hat{\mathcal{P}}, G) \\
    &= \deg G + 1 - g \\
    &= \frac{3}{2}mn + \frac{1}{2}m + 3X - 2.
\end{align*}
This follows from the fact that
\begin{equation*}
    \deg G = 2mn + 4X - 4 \geq m(n - 1) + 2X - 3 = 2g - 1.
\end{equation*}

Let $\mathcal{P} \subseteq \hat{\mathcal{P}}$ be a subset of places corresponding to an information set of $\hat{\mathcal{C}}_A \star \hat{\mathcal{C}}_B$. Define the codes $\mathcal{C}_A = \ev_\mathcal{P}(L_A)$ and $\mathcal{C}_B = \ev_\mathcal{P}(L_B)$. These codes have length $N = \dim \hat{\mathcal{C}}_A \star \hat{\mathcal{C}}_B$ and dimensions $m + X$ and $n + X$, respectively.

We partition the matrices $A$ and $B$ using the outer product partitioning such that
\begin{equation*}
    A = \begin{pmatrix}
        A_1 \\ \vdots \\ A_m
    \end{pmatrix}, \quad
    B = \begin{pmatrix}
        B_1 & \cdots & B_n
    \end{pmatrix}.
\end{equation*}
Then the product $AB$ can computed directly from the submatrix products $A_jB_{j'}$. For encoding $A$ and $B$ we consider the functions
\begin{align*}
    f &= \sum_{k=1}^X R_k f_k + \sum_{j=1}^m A_j f_{X+j} \in L_A, \\
    g &= \sum_{k'=1}^X S_{k'} g_k + \sum_{j'=1}^n B_{j'} g_{X+j'} \in L_B.
\end{align*}
The encoded pieces of $A$ and $B$ are determined by $\tilde{A} = \ev_\mathcal{P}(f)$ and $\tilde{B} = \ev_\mathcal{P}(g)$. As each worker computes the product of their encoded pieces, the user receives the responses $\ev_\mathcal{P}(h)$, where $h = fg$. As the set $\mathcal{P}$ is chosen to be an information set of the code $\hat{\mathcal{C}}_A \star \hat{\mathcal{C}}_B$, we get that $\ev_\mathcal{P}$ is injective on $L_A \cdot L_B$. Hence, we can recover $h$ from the response vector $\ev_\mathcal{P}(h)$.

\begin{table*}[t]
    \scriptsize
    \centering
    \resizebox{\textwidth}{!}{\begin{tabular}{C|CCC|CCC}
        & 0 & \cdots & 2X - 2 & m + 2X - 2 & \cdots & mn + 2X - 2 \\ \hline
        0 & 0 &  \dots & 2X - 2 & m + 2X - 2 & \cdots & mn + 2X - 2 \\
        \vdots & \vdots & \ddots & \vdots & \vdots & \ddots & \vdots \\
        2X - 2 & 2X - 2 & \cdots & 4X - 4 & m + 4X - 4 & \cdots & mn + 4X - 4 \\ \hline
        m(n - 1) + 2X - 1 & m(n - 1) + 2X - 1 & \dots & m(n - 1) + 4X - 3 & mn + 4X - 3 & \dots & 2mn - m + 4X - 3 \\
        \vdots & \vdots & \ddots & \vdots & \vdots & \ddots & \vdots \\
        mn + 2X - 2 & mn + 2X - 2 & \cdots & mn + 4X - 4 & mn + m + 4X - 4 & \cdots & 2mn + 4X - 4
    \end{tabular}}
    \caption{The pole number table of the chosen pole numbers $\phi$ and $\gamma$ with entry $(j, j')$ corresponding to $\phi_j + \gamma_{j'}$. Furthermore, we used the fact that $d = m(n - 1) + 2X - 1$.}
    \label{tab:pole_number_table}
\end{table*}

Let us study the form of the product $h = fg$ with the pole number table depicted in \cref{tab:pole_number_table}. The pole numbers in the bottom right quadrant are all at least $mn + 4X - 3$, while the pole numbers in the remaining quadrants are at most $mn + 4X - 4$. Furthermore, the pole numbers in the bottom right quadrant are all distinct. The product $h = fg \in L_A \cdot L_B$ has the form
\begin{align*}
    h &= \sum_{k=1}^X \sum_{k'=1}^X R_k S_{k'} f_k g_{k'} + \sum_{k=1}^X \sum_{j'=1}^n R_k B_{j'} f_k g_{X + j'} \\
    &+ \sum_{j=1}^m \sum_{k'=1}^X A_j S_{k'} f_{X + j} g_{k'} + \sum_{j=1}^m \sum_{j'=1}^n A_j B_{j'} f_{X + j} g_{X + j'}.
\end{align*}
The terms in the first three sums correspond to the pole numbers in the first three quadrants of the pole number table, \emph{i.e.}, they are contained in the Riemann--Roch space $\mathcal{L}(G')$, where $G' = (mn + 4X - 4)P_\infty$. Therefore, we consider the decomposition $L_A \cdot L_B = L \oplus H$, where $L = (L_A \cdot L_B) \cap \mathcal{L}(G')$ and $H = \spn \{f_{X + j}g_{X + j'} \mid j \in [m], j' \in [n]\}$. We can therefore project $h$ to $H$ to obtain
\begin{equation*}
    \sum_{j=1}^m \sum_{j'=1}^n A_jB_{j'} f_{X + j}g_{X + j'}.
\end{equation*}
The functions $f_{X + j}g_{X + j'}$ have distinct pole numbers, which means that they are linearly independent. We can therefore compute the products $A_jB_{j'}$ for all pairs $(j, j') \in [m] \times [n]$.

\begin{theorem}
By choosing $f_j = x^{j-1}$ and $g_j = x^{j-1}$ for $1 \leq j \leq X$ and the set $\hat{\mathcal{P}}$ appropriately, the scheme is $X$-secure.
\end{theorem}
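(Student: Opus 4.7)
The plan is to invoke \cref{thm:linear_SDMM_security}, so it suffices to show that the two security codes $\mathcal{C}_A^\security$ and $\mathcal{C}_B^\security$, i.e., the codes generated by the evaluations of the first $X$ basis functions, are MDS. With the choice $f_j = g_j = x^{j-1}$ for $1 \leq j \leq X$, the two security codes coincide and admit the generator matrix
\begin{equation*}
    M = \bigl( x(P)^{j-1} \bigr)_{1 \leq j \leq X,\; P \in \mathcal{P}}.
\end{equation*}
Every $X \times X$ submatrix of $M$ is a Vandermonde matrix in a subset of the values $\{x(P) : P \in \mathcal{P}\} \subseteq \F_q$, so it is invertible precisely when the chosen $x$-values are pairwise distinct. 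Hence the entire proof reduces to arranging that the map $P \mapsto x(P)$ is injective on $\mathcal{P}$; since $\mathcal{P} \subseteq \hat{\mathcal{P}}$, it is enough to arrange this on $\hat{\mathcal{P}}$.

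To choose $\hat{\mathcal{P}}$ appropriately, I would use the structure of rational places in the Kummer extension $F = \F_q(x, y)$ with $y^2 = f(x)$. Every $P \in \P_F^1 \setminus \{P_\infty\}$ lies above a unique rational place $P_{x - \alpha}$ of $\F_q(x)$ with $x(P) = \alpha$, and by \cite[Proposition~3.7.3]{stichtenoth2009algebraic} the fiber above $P_{x - \alpha}$ consists of exactly one rational place of $F$ when $f(\alpha) = 0$, exactly two rational places when $f(\alpha)$ is a nonzero square, and no rational places when $f(\alpha)$ is a non-square. In particular, each $\alpha \in \F_q$ occurs as an $x$-value of at most two places in $\P_F^1 \setminus \{P_\infty\}$.

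Thus I form $\hat{\mathcal{P}}$ by picking exactly one rational place from each nonempty fiber, producing a subset of $\P_F^1 \setminus \{P_\infty\}$ of size at least $\lceil (\lvert \P_F^1 \rvert - 1)/2 \rceil$ on which $x(\cdot)$ is injective. Any $\mathcal{P} \subseteq \hat{\mathcal{P}}$ then inherits this injectivity, so every $X \times X$ submatrix of $M$ is a nonsingular Vandermonde matrix and the security code is MDS. Security then follows from \cref{thm:linear_SDMM_security}.

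The main obstacle is not the algebra, which is just the classical Vandermonde/MDS argument, but the bookkeeping needed to make sure that thinning $\hat{\mathcal{P}}$ down to one place per fiber still leaves enough places to meet the construction's hypothesis $\hat{N} > \deg G = 2mn + 4X - 4$. Using the Hasse--Weil estimate $\lvert \lvert \P_F^1 \rvert - (q + 1) \rvert \leq 2g\sqrt{q}$ one can check that taking $q$ sufficiently large relative to $m$, $n$, $X$ makes $\hat{N} \geq \lceil (\lvert \P_F^1 \rvert - 1)/2 \rceil$ exceed the required bound, so the earlier construction and the choice of an information set of $\hat{\mathcal{C}}_A \star \hat{\mathcal{C}}_B$ inside $\hat{\mathcal{P}}$ go through without change.
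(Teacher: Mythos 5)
Your proposal takes essentially the same approach as the paper's own proof: both reduce to Theorem~\ref{thm:linear_SDMM_security} via the Vandermonde/MDS argument, both observe that there are at most two rational places of $F$ over each $P_{x-\alpha}$ (so one can thin $\hat{\mathcal{P}}$ to a set on which $x(\cdot)$ is injective), and both appeal to the Hasse--Weil bound to guarantee that enough places survive the thinning. The only things the paper states explicitly that you elide are the check that $(x^{j-1})_\infty = 2(j-1)P_\infty$ matches the required pole numbers $\phi_j = \gamma_j = 2(j-1)$ from the construction, while you add a bit more detail on the splitting behavior of the fibers; neither difference is substantive.
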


\begin{proof}
Let $\alpha \in \F_q$. There can be at most two places in $\P_F$ such that $(x - \alpha)(P) = 0$, since the extension degree of $F$ over $K(x - \alpha) = K(x)$ is 2. Choose the set $\hat{\mathcal{P}} \subseteq \P_F^1 \setminus \{P_\infty\}$ such that it includes at most one of these two solutions for all $\alpha \in \F_q$. Thus, $\hat{\mathcal{P}}$ can be chosen to have size at least $\frac{1}{2}\lvert \P_F^1 \setminus \{P_\infty\} \rvert$. Then $x(P) \in \F_q$ are distinct for all $P \in \hat{\mathcal{P}}$. By the Hasse--Weil bound we can take $\P_F^1$ to be as large as we want for large enough $q$.

The choice of $f_j$ and $g_j$ is appropriate, since $(x^{j-1})_\infty = 2(j-1)P_\infty$, which is exactly what was required in the construction of the scheme. Let $\mathcal{P} = \{P_1, \dots, P_N\}$ and $\alpha_i = x(P_i) \in \F_q$. The generator matrix of the security part is then
\begin{equation*}
    G_A = \begin{pmatrix}
        f_1(P_1) & \cdots & f_1(P_N) \\
        \vdots & \ddots & \vdots \\
        f_X(P_1) & \cdots & f_X(P_N)
    \end{pmatrix}
    = \begin{pmatrix}
        1 & \cdots & 1 \\
        \alpha_1 & \cdots & \alpha_N \\
        \vdots & \ddots & \vdots \\
        \alpha_1^{X - 1} & \cdots & \alpha_N^{X - 1}
    \end{pmatrix}.
\end{equation*}
This matrix is the generator matrix of an MDS code, since it is a Vandermonde matrix with distinct evaluation points. Similarly, $G_B = G_A$ is the generator matrix of an MDS code. By \cref{thm:linear_SDMM_security} the scheme is $X$-secure.
\end{proof}

We conclude this section by formulating our construction as the following theorem. We call this the PoleGap SDMM scheme.

\begin{theorem}[PoleGap SDMM scheme]\label{thm:construction}
Given partitioning parameters $m, n \geq 1$ such that at least one of them is even, and the collusion parameter $X \geq 1$, our construction gives an $X$-secure linear SDMM scheme that uses
\begin{equation*}
    N \leq \begin{dcases*}
        \frac{3}{2}mn + \frac{1}{2}m + 3X - 2 & if $m$ is even and $n$ is odd \\
        \frac{3}{2}mn + \frac{1}{2}n + 3X - 2 & if $m$ is odd and $n$ is even \\
        \frac{3}{2}mn + \frac{1}{2}\min\{m, n\} + 3X - 2 & if $m$ and $n$ are even
    \end{dcases*}
\end{equation*}
workers.
\end{theorem}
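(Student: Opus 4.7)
The plan is to show that the construction already developed in this section satisfies every requirement in the theorem, and then handle the case when $n$ (rather than $m$) is even by a symmetry argument. The proof therefore has three components: decodability, security, and the announced bound on $N$.

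For the case when $m$ is even, decodability has essentially been verified in the paragraphs preceding the theorem. The key point is the decomposition $L_A \cdot L_B = L \oplus H$ induced by the pole number table (\cref{tab:pole_number_table}): the bottom-right quadrant contributes the $mn$ cross-terms $f_{X+j}g_{X+j'}$, whose pole numbers $\phi_{X+j} + \gamma_{X+j'}$ are all distinct and strictly exceed those from the other three quadrants. Hence projecting $h = fg$ onto $H$ recovers each $A_j B_{j'}$, and injectivity of $\ev_\mathcal{P}$ on $L_A \cdot L_B$ (guaranteed by the choice of $\mathcal{P}$ as an information set of $\hat{\mathcal{C}}_A \star \hat{\mathcal{C}}_B$) allows us to perform the projection from the worker responses. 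Security is exactly the content of the previous theorem in this section.

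For the bound on $N$, I would observe that $N = \dim(\hat{\mathcal{C}}_A \star \hat{\mathcal{C}}_B)$ by definition of $\mathcal{P}$, and that the earlier displayed inclusion yields
\begin{equation*}
N \leq \dim \mathcal{C}_\mathcal{L}(\hat{\mathcal{P}}, G) = \deg G + 1 - g,
\end{equation*}
using $\hat{N} > \deg G$ and $\deg G \geq 2g - 1$. Substituting $\deg G = 2mn + 4X - 4$ and $g = \tfrac{1}{2}(m(n-1) + 2X - 2)$ and simplifying gives $N \leq \tfrac{3}{2}mn + \tfrac{1}{2}m + 3X - 2$, which is the first branch of the piecewise bound.

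For the case when $n$ is even but $m$ is not, the construction as written does not apply directly, because the parity of $d = m(n-1) + 2X - 1$ is what forces $\gamma$ to consist of pole numbers. The natural remedy is a symmetry argument: apply the whole construction to the product $B^{\mathrm{T}} A^{\mathrm{T}}$ with the roles of $m$ and $n$ swapped (so that the new first partitioning parameter is $n$, which is even), run the scheme, and transpose the recovered product at the end. The same derivation then yields $N \leq \tfrac{3}{2}nm + \tfrac{1}{2}n + 3X - 2$. The main obstacle I anticipate is making sure this swap is clean — in particular, verifying that the parity conditions ensuring each entry of the swapped $\phi$ and $\gamma$ lies in the Weierstrass semigroup $W(P_\infty)$ still hold, and that the pole number table retains the quadrant structure needed for decodability. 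This is a direct relabeling, but it must be stated explicitly for the theorem's second branch to follow.
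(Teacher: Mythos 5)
Your proposal is correct and follows essentially the same path as the paper: the theorem is a summary of the construction developed throughout the section, so decodability comes from the quadrant structure of the pole number table and the injectivity of $\ev_\mathcal{P}$ on $L_A \cdot L_B$, security comes from the preceding theorem, and the worker bound $N \leq \deg G + 1 - g = \tfrac{3}{2}mn + \tfrac{1}{2}m + 3X - 2$ follows exactly as you compute. Your explicit treatment of the second branch via swapping $m \leftrightarrow n$ (equivalently computing $B^{\mathrm{T}}A^{\mathrm{T}}$) is the right move and is a point the paper leaves implicit; the parity check you flag does go through, since with $n$ even the swapped degree $d' = n(m-1) + 2X - 1$ is odd, so the Kummer construction and the quadrant argument apply verbatim after relabeling.
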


\section{Comparison}\label{sec:comparison}

In this section, we will compare our construction to some state-of-the-art schemes which use the same matrix partitioning that is used in our construction. Comparing schemes with the same partitioning makes the comparison meaningful since the parameters in different partitioning methods are not directly comparable. Some of the state-of-the-art SDMM schemes which use the outer product partitioning include the A3S scheme in \cite{kakar2019capacity} and the GASP scheme in \cite{d2021degree}. We will compare the schemes by computing their rate
\begin{equation*}
    \mathcal{R} \coloneqq \frac{mn}{N},
\end{equation*}
which is inversely proportional to the number of workers that are needed. The goal is to maximize the rate. The number of workers in the A3S is $N = (m + X)(n + 1) - 1 = mn + m + (n + 1)X - 1$. The number of workers in the GASP scheme was explicitly computed in \cite{d2021degree}, but an upper bound is given by the GASP\textsubscript{big} scheme, which uses $N \leq 2mn + 2X - 1$ workers. 

\begin{figure*}[t]
    \centering
    \includegraphics[width=0.8\textwidth]{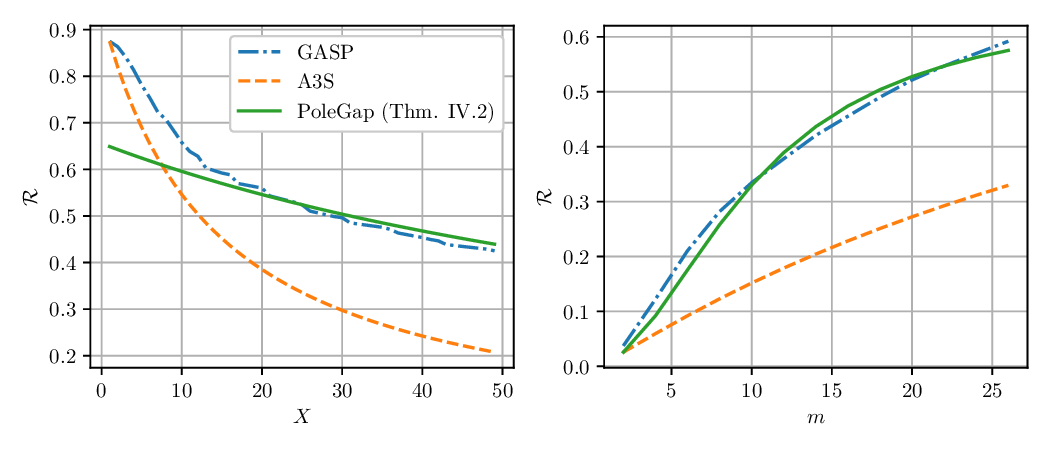}
    \caption{Comparison of the rate $\mathcal{R}$ between the GASP scheme in \cite{d2021degree}, the A3S scheme in \cite{kakar2019capacity} and our PoleGap scheme in \cref{thm:construction}. The left plot has fixed $m = n = 14$ and the rate is measured as a function of $X$. The right plot has fixed $X = 50$ and the rate as a function of even $m = n$.}
    \label{fig:rate_comparison}
\end{figure*}

Heuristically, we see that for fixed $m, n \geq 3$, our scheme achieves a better rate than the A3S scheme as $X \to \infty$. Similarly, for a fixed $X$, our scheme achieves a better rate as $mn \to \infty$ compared to the upper bound of the GASP\textsubscript{big} scheme. In \cref{fig:rate_comparison} we see that for some parameters our construction achieves a higher rate than the general GASP scheme. In particular, our scheme achieves a better rate than the A3S or the GASP scheme for 9.1\% of parameters in the range $2 \leq m \leq 50$, $1 \leq n \leq m$, $1 \leq X \leq 50$ and $m$ even. The A3S scheme seems to perform significantly worse compared to the two other schemes in terms of rate. However, it is simple to add straggler tolerance to the A3S, while this is not simple for the other schemes due to their more complicated choices of evaluation points.

These statistics and \cref{fig:rate_comparison} show that our construction is a slight improvement over the state-of-the-art schemes for some parameters. However, we believe that this construction shows that using algebraic geometry codes in SDMM has great potential.

\begin{remark}\label{rmk:field_size}
Our construction only requires the underlying finite field to have odd characteristic and suitably many rational places in the Kummer extension, which can be satisfied if the field is large enough. This allows us to utilize suitably large prime fields, which are favorable for meaningful computation; namely, the initial data is often real-valued and embedded into a suitable finite field. For the computation results to be understandable in the initial domain, we need to have some homomorphic relation between the real numbers and the chosen finite field. For instance, truncating and embedding real data to a field extension with a small characteristic severely fails in terms of this requirement. For example, let $A \in \F_q^{t \times s}$ and $B \in \F_q^{s \times r}$ be the all-ones matrices with $s = \chr(\F_q)$. Then $AB = 0 \in \F_q^{t \times r}$, but it is now ambiguous what real-valued matrix this corresponds to since the product of the all-ones matrices over the real numbers is nonzero. Unlike in distributed storage or communications, there is no need to minimize the field size, since the computations require us to work with a field with large characteristic.
\end{remark}

Given the pole number table in \cref{tab:pole_number_table} we could also do a similar construction as the GASP scheme in \cite{d2020gasp}. Instead of choosing suitable functions with certain pole numbers we could choose monomials with specified degrees. In this approach the resulting star product code would not be an algebraic geometry code, but would some arbitrary linear code. The easy computation of the dimension of such a code by utilizing the genus of the underlying function field, as well as being able to utilize error correction algorithms, makes this approach quite interesting.

\section{Conclusions}

In this paper, we investigated how algebraic geometry codes may be used in secure distributed matrix multiplication by giving an explicit construction coming from Kummer extensions, which we call the PoleGap SDMM scheme. This construction is a proof-of-concept that using AG codes can bring advantages over previously considered schemes. The gaps in the Weierstrass semigroup correspond to the gaps in the GASP scheme, which means that the number of the gaps is counted by the genus of the algebraic function field. Additionally, using AG codes may have other benefits, such as more flexibility in field size and error correction. The fact that the response code is known to be an AG code allows us to utilize error correction algorithms such as list decoding for the SDMM scheme. As future work, we would also like to extend our construction to the analog case, since the theory of Kummer extensions works just as well over the fields $\R$ or $\C$.

\section*{Acknowledgements}

This work has been supported by the Academy of Finland under Grant No.\ 336005. The work of O.~Makkonen is supported by the Vilho, Yrjö and Kalle Väisälä Foundation of the Finnish Academy of Science and Letters. 

\bibliography{bib.bib}
\bibliographystyle{ieeetr}

\end{document}